\newcommand*{\TITLE}{Macroscopes: models for collective decision making}
\newcommand*{\PDFAUTHOR}{%
Subramanian Ramamoorthy, Andras Z. Salamon, Rahul Santhanam}
\newcommand*{\KEYWORDS}{macroscope, collective decision making, communication complexity, meta-information}
\newtheorem{theorem}{Theorem}
\title{MACROSCOPES: MODELS FOR COLLECTIVE DECISION MAKING}
\DeclareFixedFont{\auacc}{OT1}{ptm}{m}{n}{10}
\author{
\alignauthor
Subramanian Ramamoorthy\\
       \affaddr{School of Informatics}\\
       \affaddr{University of Edinburgh}\\
       \affaddr{10 Crichton Street, Edinburgh}
       \affaddr{EH8~9AB, United Kingdom}\\
       \email{s.ramamoorthy@ed.ac.uk}
\alignauthor
{Andr{\auacc\'a}s Z. Salamon}\\
       \affaddr{School of Informatics}\\
       \affaddr{University of Edinburgh}\\
       \affaddr{10 Crichton Street, Edinburgh}
       \affaddr{EH8~9AB, United Kingdom}\\
       \email{andras.salamon@ed.ac.uk}
\alignauthor
Rahul Santhanam\\
       \affaddr{School of Informatics}\\
       \affaddr{University of Edinburgh}\\
       \affaddr{10 Crichton Street, Edinburgh}
       \affaddr{EH8~9AB, United Kingdom}\\
       \email{rsanthan@inf.ed.ac.uk}
}
\newcommand{\eps}{\epsilon}
\begin{document}

\maketitle

\begin{abstract}
We introduce a new model of collective decision making, when a
global decision needs to be made but the parties only possess partial
information, and are unwilling (or unable) to first create a global
composite of their local views.
Our macroscope model captures two key features of many real-world
problems: allotment structure (how access to local information is
apportioned between parties, including overlaps between the parties)
and the possible presence of meta-information (what each party knows
about the allotment structure of the overall problem).
Using the framework of communication complexity, we formalize the
efficient solution of a macroscope.
We present general results about the macroscope model, and also
results that abstract the essential computational operations
underpinning practical applications, including in financial markets
and decentralized sensor networks.
We illustrate the computational problem inherent in real-world
collective decision making processes using results for specific
functions, involving detecting a change in state (constant and step
functions), and computing statistical properties (the mean).
\end{abstract}

\section{Introduction}

We consider collective decision making processes such as a market
that acts as a central mechanism for coordinating the actions of
autonomous participants. We address the questions: how does one
measure the quality of the collective decision making process, and
how weak can the central market mechanism be? In many applications,
there is significant interest in decentralizing computation while
still being able to arrive at results that cannot be computed entirely
locally. We use a simple model to capture the informational complexity
of computing global functions by aggregating results from
participants\footnote{We use the common English words agent, party, participant,
and player synonymously, ignoring more specific usage.} who
are endowed with arbitrary allotments of local information. This
allows us to draw conclusions about the requirements on allotments
and protocols, for efficient collective information processing. A key
aspect of our model is the specification of meta-information based
on distinguishing perfect information, single-blind arrangements,
and double-blind arrangements. Our technical framework is built on
the notions of communication complexity. We assume that participants
possess information which is not available to other participants;
we call this the \emph{private} information.

This work is motivated by several applications.

A rather timely application is found in the domain of
participants in electronic markets. Often, such as in financial
markets, participating agents would benefit from an understanding of the
global system dynamics \cite{darley2007nasdaq}.\footnote{See
\url{http://www.bankofengland.co.uk/publications/speeches/2009/speech386.pdf}
for a discussion of this issue from the perspective of
financial regulation, and
\url{http://www.bis.gov.uk/foresight/our-work/projects/current-projects/computer-trading}
for information on a major study by the UK government, under the
Foresight Project.}
For instance, agents might like to have signals
that indicate the presence of herding, bubbles and other aggregate
phenomena. Typically, the local view of a single agent does not provide
sufficient information to reliably detect this. Moreover, in such a
domain, one is tightly constrained by what information can be revealed,
incentives to reveal this information, and other aspects related to
privacy in computation. If we seek efficient decentralized information
processing mechanisms under these constraints, then we would like to be
able to determine what is or is not possible, employing only a coarse characterization
of resources and endowment of information. Recent studies such as
in anonymized financial chat rooms \cite{Lu2011:strategic} provide
interesting insights into the behaviour of such collectives, such
as the characterization of equilibria in which a subset of traders
profit from the information of others. This is but one example of a
larger body of economic literature related to phenomena in networked
markets~\cite{Hurwicz2006:designing}. However, in that literature,
it is not typical to investigate our question of how the allotment
structure and communication protocols relate to the efficiency with
which specific types of computation are achieved. For instance,
change detection \cite{Basseville93} is of fundamental importance
in financial markets -- how weak a protocol is sufficient to decide
a change has occurred? Recent work on the topic of complexity of financial
computations by \citeasnoun{Arora2011:computational} indicates that this is a
fertile direction to pursue.

Similar issues arise in many other application domains, such as
mobile sensor networks and distributed robotics. \citeasnoun{Leonard07}
describe a mobile sensor network for optimal data gathering,
using a combination of underwater and surface level sensing robots
to optimally gather information such as chemical and thermal
dynamics in a large volume of water (typically measured in square miles).
Similar systems have been utilized for tracking
oil spills and algal blooms. A key computational method utilized by
such distributed robotic networks involves distributed optimization
\cite{Bullo09}. The deployment of modules in such a network needs to
satisfy a spatial coverage requirement, described as a
cost function, so that each module plans trajectories to optimize this
criterion. The sensor fusion problem, to determine a combined estimate
of an uncertain parameter, may also be posed an as optimization
problem in the sense of maximizing information gain. Despite this
rigorous approach, relatively little is known about how to compare
different formulations of these optimization problems -- given that
we are interested in a certain type of global function (say, number
of peaks in a chemical concentration profile or some distributional
aspect of the overall field) using weak local sensing and the ability
to move sensing nodes, how does one compare or otherwise characterize
protocols and other aspects of the problem formulation?

A line of work that begins to touch upon some of these questions is
that of Ghrist and collaborators. \citeasnoun{Ghrist09} use tools
from algebraic topology to solve the problem of counting targets
using very weak local information (such as unlabelled counts in a
local neighbourhood). \citeasnoun{deSilva07} present an approach
to detection of holes in coverage through decentralized computation
of homology. Here again, the focus being on aspects of the specific
function being computed, the authors do not address the relationship
between the protocols and problem formulation, and the efficiency
of computation.

Extending the idea of decentralized computation in social systems,
consider the problem faced by a program committee, such as one
that might review this paper. We seek a decentralized computation
of a ranking problem. Similar ranking problems also occur in
executive decision making such as the hiring decision in academic
departments. The key issue here is that of parsimonious information
sharing, coupled implicitly or explicitly with the meta-information
problem.  These challenges arise due to limitations on the
capacities of the decision makers to exchange information with each other.

The common theme underlying all of these applications is the
computation of a function based on allotment of portions of the
information to parties who have reasonable amounts of computational
resource but would like to keep exchange of information limited. We
wish to understand how weak the corresponding protocols can be,
for various types of functions. Major categories of functions of
interest include change detection and ranking.  We model change
detection by an abstract version of the key underlying problem, of
determining whether the data forms a constant or a step function.
The main statistical property we consider here is the computation
of the mean.

We are interested in understanding just how much communication must
occur to answer various questions of interest.  Therefore, instead of
working with detailed models for the questions about market behaviour
or sensor networks discussed previously, we have deliberately kept
the models we study as \emph{simple} as possible.  \emph{This makes
our lower bound results stronger.}  Determining an answer to any
more realistic question will require even more information to be
exchanged than in these simple models, as long as the more realistic
model includes the simpler problem at its core.  It therefore makes
sense in our setting to study the simplest possible embodiment of
each of the core problems.  For the upper bounds, our results are a
first step and will need to be extended to more realistic models.

\section{Model}

Our model is based on the notion of \emph{communication complexity}
\cite{Kushilevitz-Nisan97}, which has been highly influential in
computer science.  A \emph{Boolean} function models yes/no decisions,
by requiring that the function take either the value 0 or the value 1.
A quantity with value either 0 or 1 is known as a \emph{bit}, and
quantities that are drawn from a larger range of values can be expressed by
using multiple bits; a function that is defined over a domain containing $2^n$ different
values is said to have an $n$-bit \emph{input}.
Say two players Alice and Bob wish to compute a Boolean function $f$
on a $2n$-bit input, but Alice only has access to the first $n$ bits and Bob
to the other $n$ bits. Alice and Bob are not \emph{computationally}
constrained, but they are \emph{informationally} constrained. The
question now is: how many bits of information do Alice and Bob need to
exchange to compute $f$ on a given $2n$-bit input? A \emph{protocol}
for this problem specifies, given the inputs to the players and the
communication so far, which player is the next to send information,
as well as what information is actually sent. There is a trivial
protocol where Alice merely sends her part of the input to Bob. Bob
now has all the information he needs to compute $f$, and he sends back
the 1-bit answer to Alice. The \emph{cost} of a protocol is the
total number of bits that are exchanged; this simple protocol has a
cost of $n+1$ for \emph{any}
function. The field of two-party communication complexity studies, for
various functions $f$ of interest, whether more efficient protocols
exist. As an example, for the Equality function which tests whether
Alice and Bob's inputs are exactly the same, it is known that the
communication upper bound of $n+1$ is tight for \emph{deterministic}
protocols, but there is an improved protocol with cost $O(\log(n))$
when the players' messages are allowed to be randomized and it is
sufficient for the final answer to be correct with high probability.

The notion of communication complexity can be generalized
from the two-party setting to the \emph{multi-party} setting
\cite{Chandra1983:multi-party}. Here the number of
players is not limited to two, each player has some information about the global input, and
they wish to compute some Boolean function of the global input. There
are two standard models for how the input is distributed among the
players: the number-in-hand model (NIH) and the number-on-forehead
(NOF) model. Suppose there are $k$ players and the global input is $N$
bits long. In the NIH model, there is some fixed partition of the
global input into $k$ parts, and each player gets one of these parts.
In the NOF model, again there is a fixed partition into $k$ parts,
but the $i$-th player gets all the parts \emph{except} the $i$-th part.

The main motivation for our model is that in many situations, such
as financial markets or sensor networks, information is distributed
among players in a more complex fashion than in the NOH or NIF
models. Moreover, the players might not have control over which pieces
of information they have access to -- the \emph{allotment} of inputs to
players might be arbitrary, perhaps even adversarial.
As an example, creators of financial instruments may decide
which assets to bundle into pools that are then offered for sale.
Purchasers of such instruments might wish to check fairness of
allocation, without revealing to each other their precise holdings;
or regulators might wish to check that sellers behaved impartially
but without relying on full disclosure.

Yet, the players might
still wish to compute some function of their global input
in this less structured setting. Now different kinds of question arise
than in the standard communication complexity setting. For a given
function, which kinds of allotment structures allow for protocols? Does
the meta-knowledge of what the allotment structure actually is make
a difference to whether there is an efficient protocol or not? These
questions are interesting even for simple functions which have been
thoroughly investigated in the standard setting.

To be more formal, let $f$ be a function which $k$ players wish
to compute on a global input $x$ of size $N$ bits. An $(N,k)$
allotment structure is a sequence of $k$ subsets $S_1, S_2 \ldots
S_k$ of $[N] = \{1,2,\dots,N\}$. An \emph{allotment structure} corresponds to an allotment of
input bits among players in the following way: Player $i$ receives
all bits $x_j$ for $j \in S_i$. Note that unlike in the NOH and
NIF models, this allotment of input bits is completely general --
it might be the case that two players receive the same set of bits,
for example. This is the main novelty of our approach. Our intention
here is to model two kinds of situations. In the first,
the players have little control over which pieces
of information they can access -- they have to do the best they can,
with the available information. In the second, the allotment is made
by a centralized authority, and it is of interest to study \emph{which}
allotment would most facilitate the computation in question.

A $k$-player \emph{macroscope} on $N$ bits is simply a function $f$ on $N$ input bits together with
an $(N,k)$ allotment structure.  We will abuse notation and sometimes
use a macroscope to refer to a \emph{sequence} of functions $f_{N},
N = 1 \ldots \infty$, where each function $f_N$ depends on $N$
bits. This will enable us to pose and study the question of the
asymptotic efficiency of protocols for macroscopes.

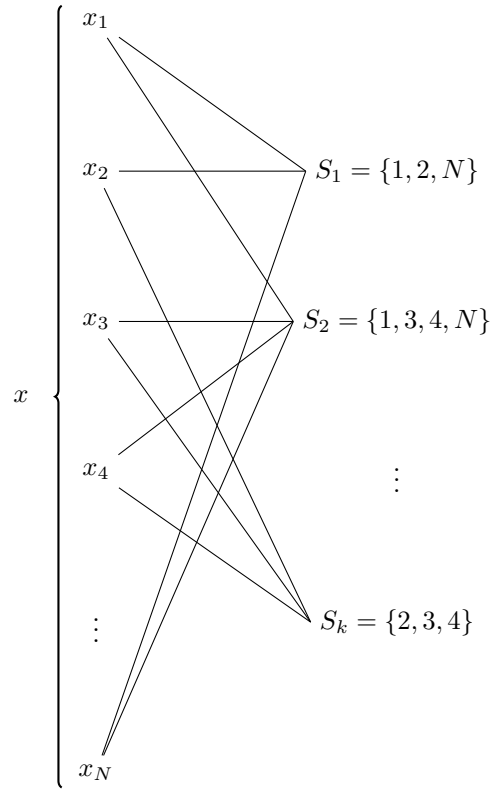
\begin{figure}[h]
\centering
\begin{tikzpicture}[xscale=2,yscale=-2]
\node (x1) at (0,0) {$x_1$};
\node (x2) at (0,1) {$x_2$};
\node (x3) at (0,2) {$x_3$};
\node (x4) at (0,3) {$x_4$};
\node (xdots) at (0,4) {$\vdots$};
\node (xN) at (0,5) {$x_N$};
\node (S1) at (2,1) {$S_1 = \{1,2,N\}$};
\node (S2) at (2,2) {$S_2 = \{1,3,4,N\}$};
\node (Sdots) at (2,3) {$\vdots$};
\node (Sk) at (2,4) {$S_k = \{2,3,4\}$};
\draw (x1) -- (S1.west);
\draw (x2) -- (S1.west);
\draw (xN) -- (S1.west);
\draw (x1) -- (S2.west);
\draw (x3) -- (S2.west);
\draw (x4) -- (S2.west);
\draw (xN) -- (S2.west);
\draw (x2) -- (Sk.west);
\draw (x3) -- (Sk.west);
\draw (x4) -- (Sk.west);
\node (x) at (-0.5,2.5) {$x$};
\draw[thick,decorate,decoration={brace,mirror,raise=1.3em}] (0,-0.1) -- (0,5.1);
\end{tikzpicture}
\caption{%
Example $(N,k)$ allotment structure.  We wish to compute $f(x)$
with a $k$-player protocol, in two situations.
(Single-blind) Player $i$ knows $\{S_1,\dots,S_N\}$ and $x_j$ for $j \in S_i$.
(Double-blind) Player $i$ only knows $x_j$ for $j \in S_i$.
}
\label{fig:allotment}
\end{figure}

The generalized modelling of the allotment of inputs raises the
issue of \emph{meta-information} -- how much do players know about the
allotment of inputs, and how can they take advantage of this? In the
case of the NIH and NOF models, the allotment is implicitly known to
all players because it is fixed in advance. However, in our setting,
there are two different kinds of situations -- the \emph{single-blind}
situation and the \emph{double-blind} situation. In a single-blind
macroscope, all players know the allotment structure, however
Player $i$ does not know the values of any input bits apart from
the ones whose indices are in $S_i$. In a double-blind macroscope,
the players are more hampered in that they do not even know the
allotment structure, however they do know the indices of the bits they
receive.

It remains to formally define what a \emph{protocol} is in our model. To
keep things simple, we focus on \emph{simultaneous-message} protocols,
where each player broadcasts a sequence of bits to all players; this is
often presented figuratively as each player writing their bit string on a universally viewable
blackboard. A protocol \emph{solves} a macroscope if each player can
determine the value of the function on the global input simply by
looking at its own input bits as well as the information written on
the blackboard. The cost of a protocol for a macroscope is then the total
length of strings written by the players. In protocols for single-blind
macroscopes, the message of Player $i$ is a function of the values of
bits whose indices are in $S_i$ as well as of the allotment structure.
For double-blind macroscopes, the message of Player $i$ is a function
only of the values of bits whose indices are in $S_i$.

We make a deliberate choice in our modelling to be highly general in
terms of the allotment structure, and to be specific in terms of the
structure of the actual communication. This is because our main aim
is to understand the impact of the allotment structure on efficiency
of communication.  Our model can be extended to allow more degrees of
freedom with regard to the communication structure. One way in which
this can be done is to allow multiple-round protocols, where players
communicate in turns, with the protocol specifying whose turn it is to
communicate. Another is by allowing randomness -- here each player is
assumed to have access to a private source of randomness, on which
its message can depend. A third way is to restrict communication
to take place between specified pairs of players, i.e., there is an
implicit topology of communication. This third approach is taken in
the field of distributed algorithms \cite{Lynch96}, where however
input allotment is not modelled in a flexible way.

We are interested in protocols which have communication as low as
possible.  This is desirable not just in terms of efficiency, such as
meeting bandwidth constraints, but also in terms of \emph{privacy}. In
applications such as financial markets, the players would like to
obtain some global knowledge without revealing their own inputs. Thus,
Player $i$ has more than one reason for not following the trivial
protocol of publishing the values of all bits in $S_i$. The \emph{lower}
the communication, the \emph{less} the information revealed about the
values of bits held by individual players; we will rely on this link
between parsimony of communication and the weakness of the
coordination mechanism. Privacy requirements are
modelled more explicitly in sub-areas of cryptography such as secure
multi-party computation \cite{Yao82,Goldreich04}. We prefer not
to model these requirements explicitly so as not to complicate our model
too much.

We make no assumption about the relationship between the number of
players and the number of bits in the global input. In an application
such as sensor networks, there might be few players (sensors), each
having a large amount of information, whereas in the financial markets
application, there are typically many players each having few pieces
of information. Our model deals equally well with both extremes.

A first observation is that to compute a non-trivial function over
the global input, i.e. a function that depends on all the input bits,
the allotment structure must satisfy the \emph{covering} property --
each index $j \in [N]$ lies in at least one set $S_i$ of the allotment
structure.  If the covering property did not hold, consider an index
$j$ which does not belong to the allotment, and an input $X$ such
that $f$ is \emph{sensitive} to $X$ at index $j$, meaning that $f(X)$
is different from $F(X^{flip}_{j})$, where $X^{flip}_{j}$ is $X$ with
the value of the $j$th bit flipped. By the non-triviality of $f$,
such an input $X$ must exist. Clearly any protocol outputs the same
answer for $X$ as for $X^{flip}_{j}$ since $j$ does not belong to
the allotment, and hence the protocol cannot be correct. Henceforth, we 
automatically assume that a macroscope has the covering property.

There are no general necessary conditions on the allotment structure
beyond the covering property for computation of non-trivial functions.
But intuitively, the more ``even'' the allotment is, in the sense of
each bit being allotted to the same number of players, the easier it
is to compute a symmetric function of the inputs. We define an \emph{even} $(N,k)$ allotment structure as an allotment structure for which
there is a number $C$ such that each index $i \in [N]$ belongs to
exactly $C$ distinct subsets $S_i$, and each subset $S_i$ is of the
same size. Clearly, for such an allotment structure, each set $S_i$
is of size $NC/k$.

\section{Results}

Our first results address the question of what we can say in general about
the cost of single-blind and double-blind macroscopes.

\begin{theorem}
\label{gen-single-blind}
Every single-blind macroscope on $N$ bits has a protocol with cost $N$. Moreover, this bound is optimal.
\end{theorem}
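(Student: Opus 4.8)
The plan is to treat the upper bound and the optimality (lower bound) separately: the upper bound is a direct construction that exploits the single-blind meta-information, while the lower bound requires exhibiting a single hard macroscope.

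For the upper bound, I would use the defining feature of a single-blind macroscope: every player knows the allotment structure $\{S_1,\dots,S_k\}$. By the covering property, which we assume throughout, each index $j \in [N]$ lies in at least one set $S_i$. I would fix for each $j$ a canonical owner, say the player of least index among those $i$ with $j \in S_i$. Since all players know the allotment, they all agree on this assignment with no communication. The protocol then has each player simultaneously write on the blackboard exactly the values of the bits it owns. Every bit is written exactly once, so the total cost is exactly $N$; afterwards the blackboard contains the whole input $x$, and every player can evaluate $f(x)$. This gives the cost-$N$ upper bound for an arbitrary $f$.

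For optimality, I would exhibit one macroscope on $N$ bits that no protocol can solve below cost $N$. Take $k = N$ with the singleton allotment $S_i = \{i\}$ and let $f(x) = x_1 \oplus \dots \oplus x_N$ be parity. In the single-blind simultaneous model, player $i$'s message is a function $m_i$ of $x_i$ alone. The key lemma is that $m_i(0) \neq m_i(1)$ for every $i$: otherwise flipping bit $i$ would leave the entire transcript unchanged and leave every other player's own bit unchanged, forcing some player to output the same value on two inputs of opposite parity, a contradiction. From $m_i(0) \neq m_i(1)$ for all $i$, I would deduce that the transcript map $x \mapsto (m_1(x_1),\dots,m_N(x_N))$ is injective, so it takes at least $2^N$ distinct values. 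Since there are only $2^N - 1$ binary strings of length at most $N-1$, some transcript must have total length at least $N$, giving worst-case cost at least $N$.

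I expect the lower bound to be the main obstacle, and within it the delicate point is correctly invoking the recovery condition of the simultaneous-message model: each player computes the output from the blackboard \emph{together with} its own bits, not from the blackboard alone. The sensitivity argument sidesteps this by comparing two inputs that agree everywhere except at the flipped index $i$, so that every player $j \neq i$ sees both an identical transcript and an identical own input, yet must distinguish inputs of different parity. I would also note that $N = 1$ is trivial, and that the counting step is robust to variable-length (self-delimiting) messages, since in the single-blind setting the message boundaries are fixed by the commonly known protocol.
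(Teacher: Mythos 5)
Your proposal follows the paper's proof almost exactly: the upper bound via least-index ``responsible'' players is the paper's construction verbatim, and the lower bound uses the same hard instance (Parity with singleton allotments, $k=N$) and the same sensitivity idea. The paper's version of the lower bound is terser---it observes only that a player who never communicates would make Parity independent of its bit---whereas your key lemma $m_i(0)\neq m_i(1)$ makes the same point more carefully and, in particular, tries to handle protocols whose message lengths vary with the input, which the paper glosses over.

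However, your final counting step is flawed as written. Injectivity of $x \mapsto (m_1(x_1),\dots,m_N(x_N))$ gives $2^N$ distinct \emph{tuples} of strings, not $2^N$ distinct binary strings, and the bound ``$2^N-1$ binary strings of length at most $N-1$'' does not apply to tuples. Concretely, for $N=2$ there are five tuples of total length at most $1$ (writing $\varepsilon$ for the empty message: $(\varepsilon,\varepsilon)$, $(0,\varepsilon)$, $(1,\varepsilon)$, $(\varepsilon,0)$, $(\varepsilon,1)$), which exceeds $2^N=4$, so no contradiction follows from counting tuples; and distinct tuples such as $(1,\varepsilon)$ and $(\varepsilon,1)$ concatenate to the same string, so one cannot pass to counting strings either. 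Your parenthetical appeal to ``boundaries fixed by the commonly known protocol'' does not rescue this: in a variable-length protocol each player's message length depends on its private bit, which the other players do not know, and restricting attention to self-delimiting encodings would prove the bound only for that restricted class of protocols. Fortunately the repair is immediate from your own lemma: since $m_i(0)\neq m_i(1)$, at most one of the two messages can be empty, so the input $x$ that selects, for each $i$, the value making $m_i(x_i)$ nonempty forces every player to write at least one bit, giving worst-case cost at least $N$. (If instead one assumes each player's message length is fixed by the protocol, the lemma gives length at least $1$ per player directly, and no counting is needed in either case.)
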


Note that the upper bound \emph{does not} depend on the number of players.
The proof of the upper bound in Theorem~\ref{gen-single-blind} takes advantage of the global
knowledge the players have about the allotment structure. 

\smallskip
\begin{proof}
Consider a protocol
in which each input bit $X_i$ has a player ``responsible'' for it -- Player
$j$ is responsible for input bit $X_i$ iff $i \in S_{j}$ and $i \not \in S_{k}$
for $k < j$. It follows from the covering property that each input bit
has a player responsible for it. It should also be clear that at most one 
player is responsible for any given input bit. The protocol consists of
players sending the values of all the bits they are responsible for. Since the 
macroscope is single-blind, each player knows who is responsible for which
input bits. Hence each player can reconstruct the input from the information
 sent in the protocol, and therefore also compute the function on the input.
The cost of the protocol is $N$, since each bit has exactly one player
responsible for it. 

To see that this bound is optimal, consider the macroscope consisting of the
Parity function\footnote{The Parity function returns 1 if an odd number
of the input bits are 1, and 0 otherwise.} together with an $(N,N)$ allotment structure which allots
each input bit to a distinct player. Suppose there is a protocol for this
macroscope where one of the players does not send a message. Assume,
without loss of generality,
that this player holds the $i$-th bit. Then the Parity function cannot
depend on the $i$-th bit, which is a contradiction.
\end{proof}

\begin{theorem} \label{gen-double-blind}
Every $k$-player double-blind macroscope on $N$ bits 
has a protocol with cost at most $2Nk$. 
\end{theorem}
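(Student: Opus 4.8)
The plan is to exhibit a single simultaneous-message protocol in which every player broadcasts their entire local view, encoded so that all recipients can reconstruct the global input $x$ and then compute $f(x)$ directly. The essential difference from the single-blind case is that no player knows the global allotment structure, so the ``responsible player'' coordination used in Theorem~\ref{gen-single-blind} is unavailable: a player cannot tell whether some other player is already reporting a given bit. I would therefore have each player self-report both \emph{which} bits it holds and their \emph{values}, and rely on the covering property to guarantee that every index gets reported by at least one player.

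Concretely, Player $i$ broadcasts a string of length exactly $2N$, consisting of two bits for each index $j \in [N]$. The two bits for index $j$ encode one of three states: ``$j \notin S_i$'', or ``$j \in S_i$ and $x_j = 0$'', or ``$j \in S_i$ and $x_j = 1$''. Since in a double-blind macroscope Player $i$ knows $S_i$ together with the values $\{x_j : j \in S_i\}$, this string depends only on Player $i$'s own local information and on no other player's set, so it is a legal double-blind message. Summing $2N$ bits over $k$ players gives a total cost of $2Nk$, as required.

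Correctness is where the covering property does the work. After all messages are posted, fix any index $j \in [N]$. By the covering property there is at least one player $i$ with $j \in S_i$, and that player's two bits for index $j$ reveal $x_j$. Hence for every $j$ the value $x_j$ can be read off the blackboard by any observer, so each player reconstructs the complete input $x$ and computes $f(x)$ locally.

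The only point needing care is a modelling one rather than a genuine obstacle: a single bit per index per player would be insufficient, because a recipient could not distinguish ``Player $i$ holds bit $j$ and its value is $0$'' from ``Player $i$ does not hold bit $j$'' at all. The second bit resolves exactly this ambiguity, and this need to signal \emph{possession} separately from \emph{value} is what accounts for the factor of $2$. I make no attempt to show tightness here, since the theorem claims only an upper bound.
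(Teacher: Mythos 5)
Your proof is correct and takes essentially the same approach as the paper: each player broadcasts both its allotment set and the values of its bits, and the covering property lets everyone reconstruct the global input. The only difference is a trivial encoding choice (two bits per index, interleaved, versus the paper's $N$-bit characteristic vector of $S_i$ followed by the values), which changes nothing of substance.
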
 

\smallskip
\begin{proof}
The protocol giving the upper bound of Theorem~\ref{gen-double-blind} is a simple one. The $j$-th
player sends an $N$-bit string specifying its allotment $S_j$, as well as
at most $N$ bits which specify the values of the bits whose indices are in
the allotment $S_j$. Once this information is made public by each player,
the players can each reconstruct the input and hence compute the function
on that input.
\end{proof}

We do not know whether the bound in Theorem \ref{gen-double-blind} is
optimal in general, but we believe this to be the case for protocols
that use only one round of communication.

Our primary focus is on studying the complexity of macroscopes for
problems which arise in contexts such as electronic markets and
distributed sensor networks. For each of these problems, we are interested
in issues such as the optimal cost of solving a macroscope, the differential
cost of meta-information (the reduction in cost when using a
single-blind macroscope rather than a double-blind macroscope), and for
single-blind macroscopes, the dependency of the cost on the
allotment structure.

A fundamental problem in the context of electronic markets is the
Constancy problem of detecting whether a given function is constant or
not.\footnote{Although the actual values may be 
constantly changing, we use the abstract Boolean version of this problem to represent a more 
significant shift from some statistical notion of the normal
behaviour.  For instance, we might ask if the observed deviations from some base 
value are significant at a 5\% confidence level.}
We model
this problem as a Boolean function on $[D]^N$, which is 1 if all the
inputs are equal, and 0 otherwise. This requires a slight adaptation of our model to inputs 
which are $D$-ary rather than binary, but this adaptation can be done
in a natural way. We are able to characterize the cost
of protocols for single-blind Constancy macroscopes \emph{optimally} in terms of the allotment structure.

Given an $(N,k)$ allotment structure, we define the intersection graph of the
structure as follows. The graph has $N$ vertices, and there is an edge
between vertex $i$ and vertex $j$ for $i,j \in [N]$ iff $S_i \cap S_j \neq \emptyset$. 

\begin{theorem}
\label{Single-Blind-Constancy}
Every $k$-player single-blind Constancy macroscope on $N$ $D$-ary inputs can be 
solved with cost 
$r (\lceil \log(D) \rceil) + k$, where
$r$ is the number of connected components of the intersection graph of
the allotment structure associated with the macroscope. Moreover, this
bound is optimal to a constant factor.
\end{theorem}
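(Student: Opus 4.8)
The plan is to prove the upper and lower bounds separately, with the lower bound carrying the real difficulty. For the upper bound I would exploit single-blindness: since every player knows the whole allotment structure, each can compute the intersection graph, its connected components $C_1,\dots,C_r$, and a fixed designated player $d_c$ for each component (one whose allotment meets $C_c$; such a player exists by covering). The structural fact I would lean on is that every player's allotment lies inside a single component, because any two indices it holds are adjacent and hence co-componental. The protocol is one simultaneous round: every player $\ell$ writes a single bit $b_\ell$ saying whether all the inputs it holds are equal, and each $d_c$ additionally writes the $\lceil\log D\rceil$-bit value $a_c$ of one input it holds. This costs $k + r\lceil\log D\rceil$. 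Correctness rests on a one-line connectivity lemma: if all $b_\ell=1$, then equal values propagate along every edge, so by connectivity each component $C_c$ is internally constant with common value $a_c$. A player then outputs $1$ exactly when all $b_\ell=1$ and $a_1=\dots=a_r$; a non-constant input either leaves some player's holdings unequal (forcing some $b_\ell=0$) or, with every component internally constant, forces two of the $a_c$ to differ.

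For the lower bound I would first recover the term $r\lceil\log D\rceil$, and this part holds for every structure with $r\ge 2$. Restrict to inputs that are constant on each component, so the input is pinned down by one value $v_c\in[D]$ per component. Under this restriction every player in $C_c$ sees only $v_c$, so (the structure being fixed) its message depends on $v_c$ alone; the concatenated messages of $C_c$ thus form a function $M_c(v_c)$, and the protocol becomes a simultaneous-message protocol for the all-equal predicate on $(v_1,\dots,v_r)$. A fooling argument makes each $M_c$ injective: if $M_c(u)=M_c(u')$ with $u\ne u'$, then $(u,\dots,u)$ and its modification at coordinate $c$ to $u'$ yield the same blackboard but opposite answers (this is where $r\ge 2$ enters). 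Hence $M_c$ has at least $D$ values and worst-case length at least $\lceil\log D\rceil$; the coordinates being independent, these worst cases co-occur and the total cost is at least $r\lceil\log D\rceil$.

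To show the additive $+k$ is unavoidable and to exhibit a single family matching the bound up to a constant, I would use $r$ disjoint stars: star $c$ has a central index shared by its $m$ players plus one private index per player, giving $r$ components and $k=rm$ players. One restriction (every star internally constant) is the all-equal reduction above and gives cost $\ge r\lceil\log D\rceil$; a second restriction (all central values $0$, private values in $\{0,1\}$) turns Constancy into the AND of the $k$ private bits, one per player, where the standard flip argument forces each player to send a bit and so gives cost $\ge k$. Both bounds constrain the same protocol, so its cost is at least $\max(r\lceil\log D\rceil,k)\ge\frac{1}{2}(r\lceil\log D\rceil+k)$, matching the upper bound within a factor of $2$.

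I expect the lower bound to be the main obstacle, and within it the delicate point is making the two reductions airtight: checking that under each restriction the messages truly collapse to functions of the reduced inputs (which uses single-blindness to fix the structure and the fact that each allotment lives in one component), and that the two separate restrictions may legitimately be combined, via the maximum, into the claimed $\Omega(r\log D+k)$ bound.
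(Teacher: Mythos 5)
Your proposal is correct and follows essentially the same route as the paper's proof: the identical upper-bound protocol (one constancy bit per player plus one $\lceil\log D\rceil$-bit representative value per connected component, with correctness by propagation along overlaps), and a lower bound obtained by proving $r\lceil\log D\rceil$ and $k$ separately via fooling/flip arguments and combining them through $\max(k,r\lceil\log D\rceil)\geq(k+r\lceil\log D\rceil)/2$. If anything, you are more careful than the paper on two points it glosses over -- you flag explicitly that the $r\lceil\log D\rceil$ argument needs $r\geq 2$, and you exhibit a single allotment family (disjoint stars with private indices) on which both lower bounds hold simultaneously, whereas the paper states the two bounds for possibly different classes of allotment structures.
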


Note that the bound does not actually depend on $N$, the number of inputs!

\medskip
\begin{proof}
Since the macroscope is single-blind, each player knows the identities of all
the other players in the same connected component in the intersection graph
of the allotment structure. This is because these identities depend only
on the allotment structure and not on the input.

The protocol is as follows: For each connected component, it is only the
player with the smallest index in that connected component who sends a ``long''
message of $\lceil \log(D) \rceil$ bits long, specifying a value in $[D]$ that
occurs in its portion of the input. In addition, each player sends a 1-bit
message saying whether its portion of the input is constant or not. The players
know that Constancy holds if each 1-bit message encodes ``yes'', and in 
addition, the values in $[D]$ sent in the long messages are all the same.
Indeed, if the function is constant, it is clear that all the 1-bit messages
are ``yes'', and that the values in the long message are the same. To argue
the converse, just notice that if a fixed player in a 
connected component of the intersection graph sees a constant value $l \in [D]$, and if all other players in the connected component see a constant value,
it follows that all players in the connected component see the same
constant value $l$.

We argue that this bound is optimal up to a factor of 2. We will give separate
arguments that $k$ bits of communication are required and that $r \lceil \log(D) \rceil$ bits of communication are required. From these separate bounds, it
follows that $max(k, r \lceil \log(D) \rceil) \geq (k+r \lceil \log(D) \rceil)/2$ bits of communication are required.   

To see that $k$ bits of communication are necessary, consider any allotment
structure such that  $S_i \setminus (\cup_{j \neq i}
S_j)$ is non-empty for each player $i$. Define a function $f: [k]
\rightarrow [N]$ such that $f(i) \in S_i \setminus \cup_{j \neq i} S_j$
for each player $i$. Suppose there is a player $i$
who does not send a message. Then the protocol gives the same answer
on both the all 1's input and the input that is all 1 except at $f(i)$, since
the communication pattern is the same for both of these inputs. However the
Constancy function differs on these inputs.

Next we show the $r \lceil \log(D) \rceil$ lower bound. Consider any
allotment structure whose intersection graph contains $r$ connected components.
Suppose that fewer than $r \lceil \log(D) \rceil$ bits of communication are 
sufficient for solving the macroscope on this allotment structure. Then there is some connected component $C$ of the intersection graph
such that the corresponding players send less than $\log(D)$ bits
of communication in all. This implies that there are two values $v_1, v_2 \in [D]$ such that the communication pattern of players in $C$ is exactly the same
when the players in $C$ all receive the input $v_1$ as when they all receive
the input $v_2$. Now consider two inputs -- input $x$ in which all co-ordinates
are the constant $v_1$ and input $y$ in which all co-ordinates outside
$C$ have value $v_1$ and co-ordinates in $C$ have value $v_2$. The communication pattern of the protocol is the same for $x$ and $y$, however the Constancy
function is true for $x$ and false for $y$. This is a contradiction. 
\end{proof}

Thus, for single-blind Constancy macroscopes, the critical property of the allotment
structure is the number of connected components of the intersection graph.
The fewer the number of connected components, the more efficiently the
macroscope can be solved. We next study the situation for double-blind
macroscopes.

\begin{theorem}
\label{Double-Blind-Constancy}
Every $k$-player double-blind Constancy macroscope on $N$ $D$-ary inputs can be
solved with cost $k \lceil \log(D+1) \rceil$. Moreover, there are $k$-player
double-blind Constancy macroscopes which require cost $k \lceil \log(D) \rceil$.
\end{theorem}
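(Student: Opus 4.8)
The plan is to prove the two halves separately: the upper bound by a local-consistency protocol, and the lower bound by an indistinguishability argument on a single well-chosen allotment. For the upper bound, I would have each player inspect only its own portion $\{x_j : j \in S_i\}$, which is all it has access to in the double-blind setting. If that portion is constant with common value $v \in [D]$, the player broadcasts $v$; otherwise it broadcasts a distinguished ``non-constant'' symbol. This is one of $D+1$ possibilities, hence $\lceil \log(D+1) \rceil$ bits per player and $k \lceil \log(D+1) \rceil$ bits in total. The decision rule is to declare constancy exactly when nobody broadcast the ``non-constant'' symbol and all broadcast values coincide. Correctness is where the covering property (assumed throughout) enters: a globally constant input forces every portion to be constant with the global value, so all messages agree; conversely, if every portion is constant with a common value $v$, then since each index lies in some $S_i$, every coordinate equals $v$. (Players with empty allotment can be assigned a ``pass'' that the decision rule ignores, which only lowers the cost; I would dispatch this case in a sentence.)

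For the lower bound it suffices to exhibit one hard macroscope, and I would take $N = k$ with the singleton allotment $S_i = \{i\}$, so that each player holds exactly one $D$-ary input. In the double-blind setting player $i$'s broadcast depends only on $x_i$, so it is a function $m_i \colon [D] \to \{0,1\}^*$. The key claim is that each $m_i$ must be injective. To prove it, suppose $m_i(a) = m_i(a')$ for distinct $a, a'$, and compare the input that is $a'$ in every coordinate (constant, answer $1$) with the input that is $a'$ everywhere except $a$ in coordinate $i$ (answer $0$). These two inputs induce identical transcripts, and every player $p \neq i$ sees the same value $a'$ in both; such a player cannot distinguish them yet is required to output different answers, a contradiction (this uses $k \geq 2$). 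So each $m_i$ realizes at least $D$ distinct messages.

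The remaining step is to convert injectivity into a bit count. Since the per-player messages must be read unambiguously off the blackboard, the $D$ distinct strings in the image of $m_i$ form a prefix-free set, whose longest member has length at least $\lceil \log D \rceil$ (equivalently, under a fixed-length-per-player convention that length is at least $\lceil \log D \rceil$). Choosing, for each player, an input value that realizes its longest message produces a single global input on which all $k$ players send at least $\lceil \log D \rceil$ bits, so the worst-case cost is at least $k \lceil \log D \rceil$. I expect this bookkeeping to be the main obstacle: the indistinguishability argument itself is clean, but nailing down exactly ``$\lceil \log D \rceil$ bits per player'' requires fixing the cost convention for variable-length messages and setting aside the degenerate case $k = 1$, where constancy is vacuously true and no communication is needed.
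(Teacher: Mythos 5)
Your proposal is correct, and the upper bound is exactly the paper's protocol: each player announces either its constant value or a ``non-constant'' flag (one of $D+1$ symbols, so $\lceil \log(D+1)\rceil$ bits each), and acceptance means all messages encode one common value; your explicit appeal to the covering property and the aside about empty allotments only make this half more complete. Where you genuinely diverge is the lower bound. The paper does not argue from scratch: since a double-blind protocol is in particular a single-blind protocol, it simply invokes the $r\lceil\log D\rceil$ lower bound from the proof of Theorem~\ref{Single-Blind-Constancy} with the maximum value $r=k$, i.e., on an allotment whose intersection graph has $k$ components. You instead fix the concrete instance $N=k$ with singleton allotments, note that each message is a function $m_i\colon[D]\to\{0,1\}^*$ of the private input alone, prove injectivity of every $m_i$ by the same two-constant-values flip argument the paper uses inside Theorem~\ref{Single-Blind-Constancy}, and then convert injectivity into a length bound. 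The core indistinguishability idea is thus shared, but your packaging buys rigor at precisely the point both of the paper's proofs leave informal: the paper's pigeonhole step (``some component sends less than $\log D$ bits'') tacitly assumes a fixed-length or self-delimiting message convention, whereas you make the convention explicit (prefix-freeness/Kraft), and you supply the needed final step that the per-player worst cases can be realized on a \emph{single} global input---valid here exactly because messages depend only on disjoint private inputs. You also flag the $k\geq 2$ requirement and the degenerate $k=1$ case, which the paper never mentions. What the paper's route buys in exchange is economy: its double-blind lower bound is a one-sentence corollary of the single-blind analysis rather than a self-contained argument.
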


\begin{proof}
The protocol giving the upper bound is simple. Each player sends a message
encoding one of $D+1$ possibilities: either the players' portion of the input
is non-constant, or if it is constant, which of the $D$ possible values it is.
The protocol accepts if each message encodes the same value $v \in [D]$.

For the lower bound, since each player is unaware of players' allotments
other than its own, the lower bound of $ r \lceil \log(D) \rceil$ in the proof
of Theorem \ref{Single-Blind-Constancy} holds with the maximum possible
value of $r$, namely $r = k$.
\end{proof}

Thus, in the case of the Constancy function, the differential cost of
meta-information can be quite significant, especially when the intersection
graph of the allotment structure is connected.

Next we consider a formalization of the
change detection problem. The Boolean
Step Function (BSF) problem is defined as follows: a string
$x \in \{0,1\}^{N}$ evaluates to 1 if there is an index $i$ such that
$x_j = 0$ for $j \leq i$ and $x_j = 1$ for $j > i$, or to 0 otherwise.

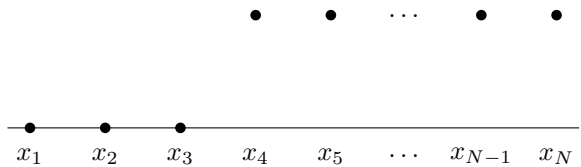
\begin{figure}[h]
\begin{tikzpicture}[xscale=1,yscale=1.5]
\tikzstyle{v}=[shape=circle,fill,inner sep=0pt,minimum size=4pt]
\draw[thin] (0.7,1) -- (8.3,1);
\node at (1,0.75) {$x_1$};
\node at (2,0.75) {$x_2$};
\node at (3,0.75) {$x_3$};
\node at (4,0.75) {$x_4$};
\node at (5,0.75) {$x_5$};
\node at (6,0.75) {$\dots$};
\node at (7,0.75) {$x_{N-1}$};
\node at (8,0.75) {$x_N$};
\node[v] at (1,1) {};
\node[v] at (2,1) {};
\node[v] at (3,1) {};
\node[v] at (4,2) {};
\node[v] at (5,2) {};
\node at (6,2) {$\dots$};
\node[v] at (7,2) {};
\node[v] at (8,2) {};
\end{tikzpicture}
\caption{Boolean step function (BSF).}
\label{fig:step}
\end{figure}

Possibilities for different structures of allotment lead to a twofold
challenge in a BSF macroscope.  First, no party may see the step,
so parties need to share some information about the values they see.
Second, if several parties detect a step, they then need to determine
whether they are observing the same step.

\begin{figure}[h]
\begin{tikzpicture}[xscale=1,yscale=1.5]
\tikzstyle{v}=[shape=circle,fill,inner sep=0pt,minimum size=4pt]
\draw[thin] (0.7,1) -- (9.3,1);
\node at (1,0.75) {$x_1$};
\node at (2,0.75) {$x_2$};
\node at (3,0.75) {$x_3$};
\node at (4,0.75) {$x_4$};
\node at (5,0.75) {$x_5$};
\node at (6,0.75) {$\dots$};
\node at (7,0.75) {$x_{N-2}$};
\node at (8,0.75) {$x_{N-1}$};
\node at (9,0.75) {$x_N$};
\node[v] at (1,1) {};
\node[v] at (2,1) {};
\node[v] at (3,1) {};
\node[v] at (4,2) {};
\node[v] at (5,2) {};
\node at (6,2) {$\dots$};
\node[v] at (7,2) {};
\node[v] at (8,1) {};
\node[v] at (9,2) {};
\end{tikzpicture}
\caption{A Boolean function that is not a step function.}
\label{fig:nonstep}
\end{figure}
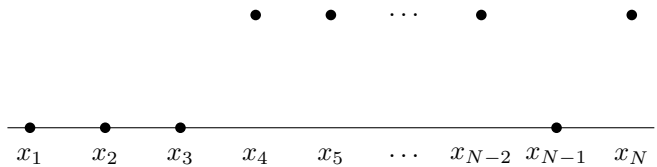

\begin{theorem}
\label{double-blind-BSF}
Every $k$-player double-blind BSF macroscope on $N$ bits can solved with cost
$2k \lceil \log(N) \rceil$.
\end{theorem}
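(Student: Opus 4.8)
The plan is to reduce the Boolean step function test to a single comparison between two order statistics of the input, each of which can be assembled from just one number per player. The characterisation I would use is: $x \in \{0,1\}^N$ is a step function exactly when every index holding a $0$ lies strictly to the left of every index holding a $1$. Setting $Z(x) = \max\{ j : x_j = 0 \}$ and $O(x) = \min\{ j : x_j = 1 \}$, with the conventions $Z(x) = 0$ if $x$ has no zero and $O(x) = N+1$ if $x$ has no one, this says that BSF outputs $1$ if and only if $Z(x) < O(x)$; the two extreme strings (all zeros, all ones) are genuine step functions and are handled correctly by the conventions.

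The protocol is then immediate. Each player $i$ looks only at the bits it holds and broadcasts two indices: $L_i = \max\{ j \in S_i : x_j = 0 \}$ (its own last zero, or the sentinel $0$ if it holds no zero) and $F_i = \min\{ j \in S_i : x_j = 1 \}$ (its own first one, or the sentinel $N+1$ if it holds no one). Each index lies in a range of size $N+1$ and so costs $\lceil \log N \rceil$ bits up to rounding, for a total of $2k \lceil \log N \rceil$. This is a valid double-blind simultaneous-message protocol, since $L_i$ and $F_i$ depend only on $S_i$ and on the values $\{ x_j : j \in S_i \}$. To decide, each player reads the blackboard, forms $\max_i L_i$ and $\min_i F_i$, and outputs $1$ iff $\max_i L_i < \min_i F_i$. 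Correctness follows from the covering property: since every index of $[N]$ is held by some player, $\max_i L_i = Z(x)$ and $\min_i F_i = O(x)$, and then the characterisation above finishes the argument.

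The conceptually interesting part is the coordination flagged before the statement, and I would make sure the single comparison absorbs both difficulties. When no single player sees the step, the decision still depends on aggregating the $L_i$ and $F_i$ across players, so no local information is lost. When several players each see a local step, the test $\max_i L_i < \min_i F_i$ is exactly the condition that all those local steps are mutually consistent; in particular, a player whose own view already has a $1$ before a $0$ contributes $F_i < L_i$, which forces $\min_i F_i < \max_i L_i$ and a correct output of $0$, so no separate local consistency check is needed. I expect the only genuine fussiness to be the boundary bookkeeping: verifying that the sentinels $0$ and $N+1$ never spuriously win the maximum or lose the minimum, that the all-zero and all-one inputs are accepted, and that the $(N+1)$-valued ranges are encoded within the stated $\lceil \log N \rceil$ bits (the extra value costing at most one bit, which the bound absorbs).
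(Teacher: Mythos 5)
Your proposal is correct and takes essentially the same approach as the paper's proof: each player broadcasts its largest zero-index and smallest one-index, everyone aggregates these by $\max$ and $\min$, and the answer is read off from comparing the two (your test $Z(x) < O(x)$ is equivalent to the paper's $l = m-1$ criterion when both a $0$ and a $1$ occur). Your treatment of the sentinel values for players holding no zeros or no ones is in fact more careful than the paper's, which leaves those boundary cases implicit.
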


\begin{proof}
The protocol is as follows: for each $i \in [k]$, Player $i$ sends two indices $l(i)$ and $m(i)$, 
where
$l$ is the largest index in $S_i$ for which $x_l = 0$ and $m$ is the smallest
index in $S_i$ for which $x_m = 1$. Given all these messages, each player
can calculate the value of the smallest index $m$ for which $x_m = 1$ simply
by taking the minimum of $m(i)$ over all players $i$, as well as the largest
index $l$ for which $x_l = 0$, simply by taking the maximum of $l(i)$ over
all players $i$. Note that $BSF(x) = 0$ iff $l=m-1$.
\end{proof} 

We conjecture that the bound of Theorem \ref{double-blind-BSF} is 
tight for double-blind BSF macroscopes for protocols that use only one
round of communication. For single-blind
macroscopes, however, we can do better.

\begin{theorem}
\label{single-blind-BSF}
Every $k$-player single-blind BSF macroscope on $N$ bits can be solved with
cost $k \lceil \log(N) \rceil + 2k$.
\end{theorem}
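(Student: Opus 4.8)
The plan is to refine the double-blind protocol of Theorem~\ref{double-blind-BSF}, which spent $2\lceil\log N\rceil$ bits per player in order to send both the largest local $0$-index $l(i)$ and the smallest local $1$-index $m(i)$. In the single-blind setting every player additionally knows the entire allotment structure, and I would exploit this to transmit essentially only one index per player. First I would have each player broadcast $2$ bits classifying its restricted input (the values $x_j$ for $j\in S_i$, read in increasing order of $j$) into one of four types: all-$0$, all-$1$, a genuine local step $0^+1^+$, or ``non-monotone'' (some $1$ appearing before some $0$). This classification costs $2k$ bits in total. If any player reports the non-monotone type, the players immediately output $\mathrm{BSF}(x)=0$, since a $1$ preceding a $0$ within a single $S_i$ forces a global $1$ at a smaller index than some global $0$.

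In the remaining case every local input is monotone, and I would have each player of the genuine-step type additionally broadcast the single index $l(i)$ using $\lceil\log N\rceil$ bits, while players of the all-$0$ and all-$1$ types send no index. The key observation, and the source of the savings, is that in the single-blind model the companion value $m(i)$ need not be transmitted: knowing $S_i$, every player can recover $m(i)$ as the successor of $l(i)$ within $S_i$ for a genuine-step player, as $\min S_i$ for an all-$1$ player, and as $N+1$ for an all-$0$ player (and symmetrically $l(i)=\max S_i$ or $l(i)=0$ in the degenerate types). Since at most $k$ players send an index, the total cost is at most $k\lceil\log N\rceil+2k$, as required.

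It then remains to reconstruct the decision. Each player computes the global quantities $l=\max_i l(i)$ and $m=\min_i m(i)$; by the covering property these equal the true largest global $0$-index and smallest global $1$-index, and $\mathrm{BSF}(x)=1$ if and only if $l=m-1$ (positions below $m$ carry no $1$ and positions above $l$ carry no $0$, forcing $x=0^{m-1}1^{N-m+1}$). The step I expect to be the real crux is the successor-recovery observation that lets meta-information replace a transmitted index; once that is in place, the only remaining care is the bookkeeping of the degenerate all-$0$ and all-$1$ cases and the conventions for $l$ and $m$, so that the criterion $l=m-1$ continues to characterize a valid step, which is routine.
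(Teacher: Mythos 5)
Your proposal is correct and takes essentially the same approach as the paper's proof: each player sends a constant number of classification bits describing its local pattern (constant, single local step, or inconsistent with a step) plus a single $\lceil\log N\rceil$-bit index, with the single-blind knowledge of the allotment structure supplying everything else needed to interpret the messages. The only cosmetic difference is that the paper has the players reconstruct the full input $x$ from the messages and evaluate BSF directly, whereas you aggregate $l=\max_i l(i)$ and $m=\min_i m(i)$ (recovering each $m(i)$ from $l(i)$ and $S_i$) and test $l=m-1$; the two decision rules carry the same information and the same cost.
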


\begin{proof}
The protocol witnessing the upper bound is as follows. Each player sends
a message consisting of two parts. The first part is 2 bits long, and 
specifies which of the following is the case: (1) the player's portion of the 
input is constant, (2) there is a single transition from 0 to 1 in the player's input, (3) neither (1) nor (2) holds. The second part is $\lceil \log(N) \rceil$ bits long. The interpretation of the second part of the message is as follows:
if case (1) holds for the first part, then the second part encodes which
constant (either 0 or 1) the player is given. If case (2) holds, then the
second part encodes the index at which a transition occurs, i.e., a number
$j \in [N]$ such that $j \in S_i$ (assuming that the player in question is
Player $i$) and such that for all $l \in S_i$, $l \leq j \implies x_j = 0$ and
$l > j \implies x_j = 1$. If case (3) holds, the contents of the second
part of the message are irrelevant.

From the messages, the players can either reconstruct the input $x$, if 
case (1) or (2) holds for each player, or conclude directly that $BSF(x) = 0$, 
if case (3) holds for any player. From the input $x$, each player can
compute $BSF(x)$ on its own.
\end{proof}

In this case, too, the advantage of using single-blind macroscopes can be
seen, though a matching lower bound in Theorem \ref{double-blind-BSF} is needed
to prove this.

Next, we attempt to model the \emph{averaging} function. Distributional
statistics of different kinds need to be computed in a decentralized way in
various contexts such as sensor networks. To model this, we again depart
from the framework of bit strings as inputs. The input is now a sequence of
real numbers $\{x_i\}$, $x_i \in [0,1]$ for each $1 \leq i \leq N$. Given
a parameter $\epsilon > 0$, we study
the cost of protocols for $\eps$-Averaging macroscopes, where each player
needs to arrive at an $\epsilon$-additive approximation to the average of the
numbers $x_i$ by using the protocol to communicate.

\begin{theorem}
\label{single-blind-Averaging}
Let $\epsilon > 0$ be fixed. Every $k$-player single-blind $\eps$-Averaging 
macroscope on
$N$ inputs can be solved with cost $k \lceil \log(k/\epsilon) \rceil$. 
\end{theorem}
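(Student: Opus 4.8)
The plan is to reuse the ``responsibility'' device from the proof of Theorem~\ref{gen-single-blind} to eliminate the double-counting caused by overlapping allotments, and then to have each player broadcast a suitably quantized partial sum. Concretely, declare Player $j$ responsible for index $i$ iff $i \in S_j$ and $i \notin S_{j'}$ for all $j' < j$, and let $R_j$ be the set of indices for which Player $j$ is responsible. By the covering property every index has exactly one responsible player, so the sets $R_1,\dots,R_k$ partition $[N]$. Because the macroscope is single-blind, every player knows the entire allotment structure and hence the whole responsibility assignment, so each player can compute its own normalized partial sum $q_j = \frac{1}{N}\sum_{i \in R_j} x_i$, and note that $q_j \in [0,1]$ while the target average equals $\sum_{j} q_j$.

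First I would have each player round its $q_j$ to a nearby multiple of $\epsilon/k$, obtaining a value $\hat q_j$ with $|\hat q_j - q_j| \le \epsilon/k$. Since $q_j$ lies in $[0,1]$, there are about $k/\epsilon$ admissible grid points, so $\hat q_j$ can be encoded in $\lceil \log(k/\epsilon) \rceil$ bits; every player broadcasts this string. The total communication is therefore $k \lceil \log(k/\epsilon) \rceil$ bits, as claimed, and I would emphasise that this does not depend on $N$.

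For correctness, each player reads all $k$ broadcast strings off the blackboard and outputs $\sum_{j} \hat q_j$. The error bound is then a one-line triangle inequality: $|\sum_j \hat q_j - \sum_j q_j| \le \sum_j |\hat q_j - q_j| \le k \cdot (\epsilon/k) = \epsilon$, so every player obtains an $\epsilon$-additive approximation to the average.

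The step I expect to be the only delicate one is fixing the quantization granularity. The errors of the $k$ partial sums accumulate additively, so tolerating a total error of $\epsilon$ forces a per-player precision of $\epsilon/k$ rather than $\epsilon$; it is exactly this factor of $k$ inside the logarithm that produces the stated cost. The overlap handling and the reconstruction are routine once single-blindness is invoked to make the responsibility assignment common knowledge, and the only genuine idea is that normalizing each partial sum into $[0,1]$ lets a value of precision $\epsilon/k$ be named with $\lceil \log(k/\epsilon)\rceil$ bits.
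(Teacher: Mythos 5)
Your proof is correct, and it follows the same high-level template as the paper's: each player broadcasts an $(\epsilon/k)$-additive approximation of a normalized partial sum lying in $[0,1]$, which costs $\lceil \log(k/\epsilon) \rceil$ bits per player, and correctness follows from summing and applying the triangle inequality. The one genuine difference is the device used to neutralize overlapping allotments. You reuse the responsibility partition from Theorem~\ref{gen-single-blind}: each index is charged to the lowest-indexed player holding it, so each $x_i$ enters exactly one partial sum with weight $1$. The paper instead uses a symmetric fractional weighting: Player $i$ approximates $\sum_{j \in S_i} x_j/N_j$, where $N_j$ is the number of players holding index $j$, so each input is spread across all its holders and still contributes a total of $1$ to the grand sum. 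Both devices rely on single-blindness in the same way (to know the responsibility assignment, respectively the multiplicities $N_j$), and both give the identical cost bound with no dependence on $N$. Your version has the minor advantage of making the normalization explicit: you divide by $N$ so that each $q_j \in [0,1]$ and $\sum_j q_j$ is exactly the average, whereas the paper's quantity $\sum_{j \in S_i} x_j/N_j$ as literally written can be as large as $|S_i|$, and the factor $1/N$ must be understood implicitly. The paper's weighting is in turn symmetric under renaming of players, which your lowest-index tie-breaking loses --- an inessential asymmetry for simultaneous-message protocols, but worth noting if one cared about anonymity.
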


Notice again that there is no dependence of the cost on $N$, merely on the
number of players and the approximation error.

\begin{proof}
We again use critically the meta-information of players about the
allotment
structure. Each player $i$ knows, for each index $j \in S_i$, the number
$N_j$ of distinct players receiving input $x_j$. The message sent by Player
$i$ is an $\epsilon/k$-additive approximation to the quantity $\Sigma_{j \in S_i} x_j/N_j$. Since the quantity is between $0$ and $1$, the approximation can be specified using $\lceil \log(k/\epsilon) \rceil$ bits. Given the messages of
all players, each player can compute an $\epsilon$-approximation to the average
simply by summing all the individual approximation. Since the individual
approximations are $\epsilon/k$-additive approximations, the sum will be
an $\epsilon$-approximation to the average.
\end{proof}

In the case of Averaging macroscopes, we \emph{can} show that the double-blind
restriction is a significant one, in that it leads to a dependence of 
the cost on the number of players. The proof uses a dimensionality argument.

\begin{theorem}
\label{double-blind-Averaging}
Let $\epsilon > 0$ be a parameter. There are $2$-player double-blind
$\epsilon$-Averaging macroscopes on $N$ inputs which require cost $N \log(1/(N \epsilon))$. 
\end{theorem}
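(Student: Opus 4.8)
The plan is to exhibit a single $2$-player allotment and argue that, precisely because the protocol is \emph{double-blind}, the players cannot replace their raw data by a cheap aggregate, so that a counting (dimensionality) argument over $[0,1]^N$ forces the stated communication. I would take an allotment in which $S_1$ and $S_2$ overlap in a large common block while each of Alice and Bob also holds a large private block, arranged so that from Bob's local view (his own values, but \emph{not} Alice's index set) he cannot tell which of his coordinates are shared with Alice and which are his alone. The role of the overlap is that the obvious economical strategy --- each player broadcasting only the sum of the values it holds --- now fails: to recover the average from Bob's partial sum Alice would have to subtract off the overlap to avoid double counting, but under double-blindness she cannot identify which of her coordinates lie in the overlap.

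First I would set up the reduction from correctness to reconstruction. For either player to output an $\epsilon$-additive approximation of $\frac1N\sum_i x_i$, the blackboard together with that player's own coordinates must determine, to additive precision $N\epsilon$, the sum of exactly the coordinates the player is missing (one coordinate contributes $1/N$ to the average, so precision $\epsilon$ in the average demands precision $N\epsilon$ in the missing sum). Since Bob's message is a function of his values alone and is oblivious to Alice's allotment, the \emph{same} message must serve every allotment consistent with Bob's local view; ranging over these, Alice may be missing any subset of Bob's coordinates. I would use this to show that Bob's message must in effect resolve his $\Theta(N)$-dimensional vector of values to additive precision $\Theta(N\epsilon)$ coordinatewise, rather than merely resolve a one-dimensional statistic such as their sum.

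With the reduction in hand, the dimensionality argument is a volume/packing estimate. Bob's inputs range over $[0,1]^{m}$ with $m=\Theta(N)$, and his message induces a partition of this cube into classes, one per message value. I would argue that each class is confined to a box of side $\Theta(N\epsilon)$ in each of the $m$ coordinates (two inputs in the same class that the protocol failed to separate would fool some player), so each class has volume at most $(cN\epsilon)^{m}$; since the cube has volume $1$, the number of message values is at least $(cN\epsilon)^{-m}$, whence the communication is at least $m\log(1/(cN\epsilon)) = N\log(1/(N\epsilon))$ up to constants. The bound is meaningful exactly in the regime $N\epsilon<1$, which is also when the claimed quantity is positive; a symmetric contribution from Alice, or an asymmetric choice concentrating the private block on one side, yields the stated cost.

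The hard part is the reduction, not the counting: I must rule out \emph{aggregate} protocols. A single sufficient statistic such as a partial sum is one-dimensional and cheap, and for many allotments (for instance a disjoint partition, where each player deduces the other's index set from the covering property) it already suffices. The entire force of the lower bound therefore rests on showing that double-blindness together with a genuinely unidentifiable overlap prevents any such aggregation, so that no low-dimensional message is correct across the family of locally indistinguishable allotments. Turning ``each coordinate must be resolved to precision $N\epsilon$'' into a clean per-class box bound --- rather than the weaker statement that only coordinate-aligned pairs are separated, which admits cheap ``diagonal'' messages such as the sum --- is the delicate quantitative point, and is where I expect the construction of the overlap, and a careful choice of hard inputs (with value collisions to defeat value-matching), to be essential.
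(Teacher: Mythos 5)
Your plan has the same skeleton as the paper's own proof: fix one player with allotment of size $l$, consider the scenarios in which the other player holds $[N]\setminus\{i\}$ for each index $i$ of the first player, note that correctness forces the first player's message (together with the helper's own values) to pin down $x_i$ to additive precision $N\epsilon$, and then count, concluding $l\log(1/(N\epsilon))$ bits per player and $N\log(1/(N\epsilon))$ in total. The one place you part company with the paper is that you refuse to perform the final counting step without further justification --- and you are right to refuse, because that step is exactly where the paper's argument is unsound. Correctness only yields reconstruction of $x_i$ \emph{given all other coordinates}, i.e., a constraint on axis-parallel line sections of the message classes; it does not yield the per-class box bound that a volume argument needs. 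Your own counterexample is decisive: the message ``local sum rounded to precision $N\epsilon$'' has diagonal-band classes, every axis-parallel section of which has length at most $N\epsilon$, so it satisfies every per-coordinate reconstruction constraint while costing only $O(\log(1/\epsilon))$ bits. The paper's proof simply asserts ``this should hold for every $i$, which means Player 1 must send at least $l\log(1/(N\epsilon))$ bits,'' with no argument excluding such aggregate messages. So the step you correctly identify as ``the hard part'' is not merely hard: it is the precise logical gap in the published proof.

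Unfortunately, the gap cannot be closed, because in the paper's model the stated bound is false, so the overlap construction and hard-input family you hope for do not exist. In this model a double-blind player may use, and broadcast, its \emph{own} index set (the paper's protocol for Theorem~\ref{gen-double-blind} does exactly this). Consider then the protocol in which each player writes the characteristic vector of its own allotment ($N$ bits) followed by its local sum rounded to the nearest multiple of $N\epsilon$ (at most $\lceil\log(2/\epsilon)\rceil$ bits, since the sum lies in $[0,N]$). After the messages are published, each player knows both index sets, subtracts the overlap contribution from the other player's announced sum exactly, and recovers the total sum to within $N\epsilon/2$, hence the average to within $\epsilon$. This is a correct one-round double-blind protocol of cost $2N + 2\lceil\log(2/\epsilon)\rceil$ for \emph{every} covering allotment, which is strictly below $N\log(1/(N\epsilon))$ for any fixed $N\ge 3$ once $\epsilon$ is small enough --- precisely the regime where the claimed bound is nontrivial. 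Your instinct that unidentifiable overlap defeats sum-only protocols is correct, but identifiability can always be purchased for $2N$ bits, after which a single one-dimensional statistic suffices; no choice of allotment or of hard inputs can prevent this. Consequently no lower bound exceeding $2N + O(\log(1/\epsilon))$ is provable here; rescuing the stated $N\log(1/(N\epsilon))$ bound would require changing the model (for instance, forbidding messages from depending on the player's own index set), not a cleverer construction.
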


\smallskip

\begin{proof}
Consider a $2$-player double-blind macroscope for $\eps$-Averaging, with
Player 1 receiving an allotment $S_{1} \subseteq [N]$ and Player 2
receiving an allotment $S_{2} \subseteq [N]$. Our assumption that the allotment
structure is covering implies $S_{1} \cup S_{2} = [N]$, but neither player
knows anything about the other player's allotment beyond this fact.

We show that a player's message must essentially reveal its input. Consider
Player 1, and let $|S_1| = l$. Now for each $i \in S_1$, consider $S_{2,i} = [N] \setminus \{i\}$. If Player 1 has allotment $S_1$ and Player 2 has allotment
$S_{2,i}$, then in a correct protocol, Player 2 eventually knows an $\epsilon$-approximation to $\Sigma_{j \in [N]} x_j/N$. Since it also knows all values except
$x_i$, this implies that it knows an $N \epsilon$-approximation to $x_i$.
Thus an $N \epsilon$-approximation to $x_i$ should be extractable from 
Player 1's message. This should hold for every $i$, which means Player 1
must send at least $l \log(1/(N \epsilon))$ bits. By a symmetric argument,
Player 2 must send at least $(N-l) \log(1/(N \epsilon))$ bits, which means
that at least $N \log(1/(N \epsilon))$ bits are communicated in all. 
\end{proof}

Theorem \ref{double-blind-Averaging} has the disadvantage that it does not
say much about $\epsilon$, which may be large in comparison to $N$. However, we
believe that a refinement of the proof which argues about information revealed
about subsets of inputs rather than individual inputs can be used to
establish an improved lower bound.

\vspace*{-5pt}
\section{Discussion}

The previous two sections have focused on the details of a
model for collective information processing and properties of this
model. We now take a step back to discuss why these results are
of relevance to the motivating examples identified in the Introduction.

Consider the problem of agents in financial markets. It is increasingly
the case that, with the emergence of diverse communication methods
and agents deciding at time scales ranging from microseconds to days,
common knowledge isn't so commonly available in practice. This has
significant implications for dynamics \cite{Rubinstein89} and much
of the modern discussion regarding markets is related to such issues
\cite{Arora2011:computational}. In this setting, there is a need for
diagnostic tools that could provide useful signals -- by computing
global properties, i.e. functions, based on local information that
can be used subject to limitations on protocols. For instance, has
there been a change from a `constant' level in a global sense? Or,
is there a significant difference -- in the form of a step change
-- between segments of a networked market? We illustrate the use of
concepts from complexity theory to address abstract versions of such
questions. Indeed, our techniques could be used to answer further
such questions -- about statistical distributions, ranking queries,
etc. A key novelty, even in comparison to the state of the art in
communication complexity, is that we consider an arbitrary endowment
of inputs and meta-information such as single and double blind protocols.

An important general direction for future work in this area would be to extend
our analysis to more directly address the subtleties of the above
mentioned dynamics as they occur in applications of interest. Also,
we would like to better understand the relationships between our
model of decentralized computation under informational constraints
and previous established models, such as \cite{Hurwicz2006:designing},
which employ different methods and focus more on temporally extended
sequential protocols such as auctions.

In terms of more specific questions, our model could be extended in
various ways. We have considered the case of a \emph{simple} structure of communication,
with simultaneous messages sent in one round of communication, and a 
possibly complex allotment structure of the inputs. We could allow
more sophisticated communication structure. For example, double-blind
protocols with two rounds of communication can emulate single-round single-blind protocols with some loss in efficiency, simply by using the first round to
share publicly information about the input allotment, and then running
the single-blind protocol in the second round. More generally, communication
might be restricted to occur only between specific \emph{pairs} of players.
In the context of sensor networks, for instance, it is natural to model
both the location of information and the structure of communication as
governed by the topology of the ambient space. Even studying very simple
functions such as Constancy in such general models appears to be interesting.

We emphasize that we are interested in understanding the communication
requirements of even very simple functions in modelling frameworks that
render them non-trivial. This distinguishes our work from the existing
research on communication complexity, where functions such as Constancy and
BSF are trivial because the model of communication is so simple. We are
especially interested in modelling some aspects of collective information
processing, such as information overlap and meta-information, which have been 
neglected so far and
which we believe can have a significant impact on the efficiency of
communication. The way we capture these notions is quite flexible and can
be used both to model computation of continuous quantities such as Average 
and computation of discrete quantities such as Boolean functions, as we
have illustrated with our results.

Another direction that we find compelling is modelling meta-information
in a more sophisticated way. As of now, we have the single-blind model and
the double-blind model. But there are various intermediate notions that
are reasonable to study. For example, each player might know the number of
players and the number of inputs but nothing about which inputs are given
to which other players. Or a player might know which other players also
receive the inputs it receives, but nothing about inputs it does not
receive. Or some global property of the allotment structure, such as that
the allotment structure is even, might be known. Notice that in the case
of Averaging macroscopes, there \emph{is} an efficient protocol whose cost
doesn't depend on the number of players if the allotment structure is even
and the size of each allotment is known. The protocol simply involves the 
players summing all their inputs and dividing by a universal constant. In
general, one can ask: assuming that there are efficient single-blind protocols
known, what is the minimal information about the allotment structure required
to give efficient protocols?

Ranking is an important subject that we have not addressed in
this paper.  Some interesting problems can be captured as ranking
macroscopes, and we leave their study for further work.

We have seen that in some cases more information hinders making
a global decision, rather than helping.  More generally, why is
allotment structure important?  We are striving to fully understand
how the allotment structure of information affects our ability to
efficiently answer questions that require global information.

\section{Acknowledgements}

S.R.~would like to acknowledge the support of the UK Engineering
and Physical Sciences Research Council (grant number EP/H012338/1)
and the European Commission (TOMSY Grant Agreement 270436, under
FP7-ICT-2009.2.1 Call 6). A.S.~acknowledges the support of EPSRC via
platform grant EP/F028288/1. R.S.~acknowledges the support of EPSRC
via grant EP/H05068X/1.

\end{document}